\documentclass[twocolumn,smallextended]{svjour3}          

\RequirePackage{amsmath,amsfonts}
\RequirePackage{hyperref}
\RequirePackage{natbib}
\RequirePackage[framemethod=default]{mdframed}
\usepackage{algorithmic}
\usepackage{algorithm}
\RequirePackage{fix-cm} 
\usepackage[T1]{fontenc}
\usepackage{graphicx}
\newcommand{\PTalgo}{Parallel Tempering algorithm}
\newcommand{\PT}{\textit{Parallel Tempering}}
\newcommand{\pt}{\textsc{PT}}

\newcommand{\walkKernel}{\mathcal{M}} 
\newcommand{\swapProposal}{\mathcal{Q}} 
 
\newcommand{\ind}[1]{\mathbf{1}_{\{ #1 \}}}


\newcommand{\real}{\mathbb{R}}
\newcommand{\bigSigmaAlgebra}{\boldsymbol{\mathfrak{F}}}
\newcommand{\swapKernel}{\mathcal{S}}
\newcommand{\swapKernelij}{\mathcal{S}_\mathit{ij}}
\newcommand{\bdpi}{\boldsymbol{\pi}}
\newcommand{\lebesgue}{\lambda_\text{Leb}}
\newcommand{\EV}{\mathbb{E}}

\newcommand{\pij}{p_\mathit{ij}}
\newcommand{\Tij}{T_\mathit{ij}}
\newcommand{\alphaij}{\alpha_\mathit{ij}}
\newcommand{\rmd}{\rm{d}}

\newcommand{\itx}{\mathit{x}}
\newcommand{\bdy}{\boldsymbol{y}}
\newcommand{\ity}{\mathit{y}}

\newcommand{\itA}{\mathit{A}}

\newcommand{\itB}{\mathit{B}}

\newcommand{\itAxB}{{\itA \times \itB}}

\newcommand{\itBxA}{{\itB \times \i}}
\usepackage{multirow}
\usepackage{caption}
\usepackage{booktabs}

\setlength{\tabcolsep}{3.5pt}
\smartqed  
\journalname{ }

\begin{document}

\title{State dependent swap strategies and adaptive adjusting of number of temperatures in Parallel Tempering algorithms.
  \thanks{Work partially suported by Polish National Science Center grant No. N N201 608740 and grant No. 2011/01/B/NZ2/00864}
    }

\titlerunning{State dependent swap steps in adaptive PT}       

\author{Mateusz Krzysztof {\L}\k{a}cki \and
  B{\l}a{\.z}ej Miasojedow}  
\institute{
    B. Miasojedow \at
    Institute of Applied Mathematics, University of Warsaw\\ 
    Banacha 2, 02-097 Warsaw,  Poland\\
    Tel.: +48-22-55-44-441\\
    \email{B.Miasojedow@mimuw.edu.pl}
    \and
    M.K. {\L}\k{a}cki \at
    Institute of Informatics, University of Warsaw\\ 
    Banacha 2, 02-097 Warsaw,  Poland\\
    Tel.: +48-602-39-50-10\\
    \email{Mateusz.Lacki@biol.uw.edu.pl}
}

\date{Received: date / Accepted: date}

\maketitle

\begin{abstract}

In this paper we present extensions to the original adaptive \PT\,algorithm. Two different approaches are presented. In the first one we introduce state-dependent strategies using current information to perform a swap step. It encompasses a wide family of potential moves including the standard one and \textit{Equi Energy} type move, without any loss in tractability. In the second one, we introduce online adjustment of the number of temperatures. Numerical experiments demonstrate the effectiveness of the proposed method.
    \keywords{Parallel Tempering \and Adaptive MCMC \and Swapping Strategies \and Equi-Energy Sampler }
\end{abstract}

\section{Introduction}\label{introduction}

Markov chain Monte Carlo (MCMC) is a generic method to approximate an integral of the form
\[
     I := \int_{\real^d} f(y) \pi(y) \rmd y \;,
\]
where $\pi$ is a probability density function, which can be evaluated point-wise up to a normalising constant. Such an integral occurs frequently when computing Bayesian posterior expectations \citep{robert-casella,gilks-mcmc}.

The random-walk Metropolis algorithm \citep{metropolis} often works well, provided the target density $\pi$ is, roughly speaking, sufficiently close to unimodal. The efficiency of the Metropolis algorithm can be optimised by a suitable choice of proposal distribution. These, in turn, can be chosen automatically by several adaptive MCMC algorithms; see \cite{haario-saksman-tamminen-am,atchade-rosenthal,roberts-rosenthal-examples,andrieu-thoms} and references therein.

When $\pi$ has multiple well-separated modes, the random-walk based methods tend to stuck in a single mode for long periods of time. It can lead to false convergence and severely erroneous results. Using a tailored Metropolis-Hastings algorithm can help, but, in many cases, finding a good proposal distribution is not easy. Tempering of $\pi$, that is, considering auxiliary distributions with density proportional to $\pi^\beta$ with $\beta\in(0,1)$, often provides better mixing between modes \citep{swendsen-wang,marinari-parisi,hansmann,woodard-schmidler-huber-rapid}. 

We focus here particularly on the parallel tempering algorithm, which is also known as the replica exchange Monte Carlo and the Metropolis coupled Markov chain Monte Carlo.


The tempering approach is particularly tempting in such settings where $\pi$ admits a physical interpretation, and there is good intuition how to choose the temperature schedule for the algorithm. 

In general, choosing the temperature schedule is a non-trivial task, but there are generic guidelines for temperature selection based on both empirical findings and theoretical analysis \citep{kofke,kone-kofke,atchade-roberts-rosenthal,roberts-rosenthal-simulated-tempering}. These theoretical findings were used to derive adaptive version of the \PT\, \citep{BM2}.
  
In the present paper we consider the adaptive version of the \PTalgo. The adaption consists in introducing state dependent swaps between differently tempered random walks. We study the impact of different distributions on potential steps and call them \textit{Strategies}. Our choice of strategies is driven by solutions already known to the literature \citep{kou2006} and used within \PT\, algorithm by \cite{ BaragattiParallelTemperingWithEquiEnergyMoves}. The novelty of our approach stems from an alternative implementation of \textit{Equi Energy} moves that renders the algorithm parameters free, i.e. the user does not need to provide precise \textit{Energy Rings} any more. We also investigate different modifications of this new approach.

We also propose an automated method for adapting the actual number of considered temperatures, in the spirit of \cite{BM2}. The temperature adaptation scheme depends on the parameters of the adaptive random walks applied in the parallelised Metropolis-Hastings stage of the algorithm in case when the state space amounts to be the usual $\real^d$. 

We have also showed that the proposed algorithm satisfies the Law of Large numbers, in the same setting as in \cite{BM2}.

\section{Definition and Notations}\label{definitionsNotation}

Our basic object of interest is the density $\pi: \Omega \mapsto \real_+$, where $\Omega = \real^d$. We assume we can evaluate pointwise a function that is proportional to $\pi$ by some constant. The \PT\, approach suggests to construct a Markov chain on the product space $\Omega^L$, where $L$ is the number of temperature levels. On that space a new density $\pi^\beta$ is constructed by posing for $\itx \in \Omega^L$
\[
	\pi^\beta ( \itx ) = \pi^\beta ( \itx_1, \dots, \itx_L) \propto \pi^{\beta_1}( \itx_1 ) \times\dots\times \pi^{\beta_L}( \itx_L )
\]
so that $\itx_i \in \Omega$, $\beta = \big( \beta_1, \beta_2, \dots, \beta_L\big)$ are the inverse temperatures subject to $1 = \beta_1 > \beta_2 > \dots > \beta_L$.

Vector $T = (T_1, \dots, T_L)$, where $ T_\ell = \beta_\ell^{-1}$, contains numbers known as temperatures and is itself referred to as the temperature scheme. 

Density $\pi^\beta$ is known up to proportionality factor and by marginalising it w.r. to the first coordinate we retrieve the original distribution $\pi$. 
	 
Markov chain $X = \{ X^{[k]}\}_{k\geq0}$ targets $\pi^\beta$ and can be spit into $L$ coordinate chains, since $X^{[k]} = \big( X^{[k]}_1, \dots X^{[k]}_L\big)$. First coordinate chain will be referred to as the base chain. 

The main idea behind \PT\, is to interweave random walk steps with random swaps between chains. Each random swap exchanges results of a random walk step from two coordinate chains. Chains corresponding to higher temperatures \footnote{Chains with lower $\beta$.} should in principle be more volatile and travel between different modes more easily than chains linked to lower temperatures. For if $\itx$ is the last visited place by the $l^\text{th}$ chain, and $\ity$ is a proposal drawn from a region where the density assumes smaller values, $\pi( \ity ) < \pi( \itx )$, then the probability of accepting such proposal, equal to $\eta_\ell$, is higher on the more tempered chain\footnote{Here we assume that random walk proposal kernel is symmetric.}  
\[ 
	\eta_\ell( \itx, \ity ) =  1 \wedge \Biggl(\frac{\pi( \ity )}{\pi( \itx )}\Biggl)^{\beta_l} 
	 > 1 \wedge \frac{\pi( \ity )}{\pi( \itx )} = \eta_1( \itx, \ity ).
\]
Therefore, the more exchanges of higher tempered chains with the base chain, the bigger the chance of escaping quickly local probability clusters where a simple Markov chain would usually get stuck.

Formally, the generation of Markov chain $X$ proceeds by repeated application of two kernels, 

\[
	X^{[n-1]} \overset{\walkKernel}{\longrightarrow} \widetilde{X}^{[n-1]} \overset{\swapKernel}{\longrightarrow} X^{[n]}.
\] 

The random walk kernel $\walkKernel$ is defined on the product sigma algebra generator by
\[
 \walkKernel( \itx, A_1 \times\dots\times A_L)  = \prod_{l=1}^L \walkKernel_l ( \itx_l, A_l ),
\]  
where 
\[
	\walkKernel_l (\itx_l, A_l) \equiv 
		\int_A \eta_\ell (\itx_l, \ity_l) q(\ity_l - \itx_l) \rmd\,\ity_\mathit{l} + \ind{ \itx_\mathit{l} \in \mathit{A}} r(\itx_\mathit{l}), 
\]
and corresponds to independently performing random walk on different chains.

Let $\Tij$ denote the transposition of coordinate $i$ and $j$ of  
\[ 
	\Tij( \itx ) = (\itx_1, \dots, \itx_j, \dots, \itx_i, \dots, \itx_L ).
\]

Then the proposal measure for the random swap is discrete and potentially state-dependent and is given by

\begin{equation}
	\swapProposal( \itx, \itA ) \equiv \sum_{i < j} \pij( \itx ) \ind{\itA}( \Tij \itx),
\end{equation}

where we one is free to choose the precise relation between $\pij$ and $\itx$, establishing with what probabilities should one carry out the transposition.

The transition kernel of the swap step is defined simply by
\begin{equation}\label{eq:swap-kernel}
    \begin{split}
        \swapKernel( \itx, \itA ) :=& \sum_{i<j} \ind{ \Tij( \itx ) \in \itA }  \pij( \itx ) \alphaij( \itx ) \\
          & + \ind{ \itx \in \itA } \sum_{i<j}  \Big( 1-\alphaij(\itx) \Big) \pij( \itx )\;.    
    \end{split}    
\end{equation}

The precise form of the acceptance probabilities will be presented in next section.

\section{State dependent swap strategies}\label{stateDependentSwapStrategies}

In standard Parallel Tempering algorithm the proposal of swap step is drawn from distribution independent on the current state of process. Usual choice is to sample pair of temperatures uniformly at random or propose only pair of adjacent temperatures chosen also uniformly. Motivation of studies strategies of swap dependent on current state is Equi-Energy sampler proposed by \cite{kou2006} and adopted to \PT\, algorithm in \cite{BaragattiParallelTemperingWithEquiEnergyMoves}. 

The main idea behind these algorithms is to exchange states with similar energy (i.e. value of log-density). The original \textit{Equi Energy} sampler is greedy in memory usage: it must store all points drawn from differently tempered trajectories. 
This might become problematic in high dimensional settings. \textit{Equi Energy} also targets a biased distribution instead of the proper one due to the use of asymptotic values in acceptance probability.
Running this algorithm also requires specifying in advance the so called Energy Rings, i.e. the partition of the state space into regions with similar energy. \cite{Schreck} addresses the problem of choosing Energy Rings and provides an and adaptive version of algorithm. 

The Parallel Tempering algorithm with Equi-Energy moves proposed by \cite{BaragattiParallelTemperingWithEquiEnergyMoves} also requires to specify Energy-Rings. This hindrance can be circumvented with the use of state dependent swap strategies performing Equi-Energy-like moves inside the Parallel Tempering algorithm with no need to predefine Energy Rings themselves. In addition, our approach is flexible and using different strategies one can promote large jumps or other features. 

The general algorithm is as follows: given current state of process after the random walk phase, $\itx = ( \itx_1, \dots, \itx_L )$, we propose that the drawing of transposition\footnote{We exchangeably refer to coordinate transposition as swaps.} $\Tij( \itx )$ should follow the state dependent discrete distribution defined by some probabilities $\pij(\itx)$, defined on a discrete simplex $i<j$. 

To assure reversibility, the swap acceptance probability should be defined by

\begin{equation}\label{eq:swap-acceptance}
    \alphaij(\itx) := 
    \frac{ \pij \big( \Tij( \itx ) \big) }{ \pij(\itx) } 
    \left( \frac{ \pi( \itx_i )}{ \pi( \itx_j ) } \right)^{ \beta_j-\beta_i }
    \wedge 1\;.
\end{equation}

The definition of acceptance probability \eqref{eq:swap-acceptance} assures that kernel $\swapKernel$ is reversible with respect to $\bdpi$.

\begin{proposition}
    Kernel $\swapKernel$ defined by \eqref{eq:swap-kernel} is reversible with respect to 
    $\bdpi( \itx ) \propto \pi(\itx_1)^{ \beta_1 } \times\cdots\times \pi( \itx_L )^{\beta_L} $.
\end{proposition}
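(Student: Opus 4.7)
The strategy is to verify detailed balance $\bdpi(\itx)\swapKernel(\itx,\rmd\ity) = \bdpi(\ity)\swapKernel(\ity,\rmd\itx)$ in integrated form, i.e.\ to show
$\int_{\itA}\bdpi(\itx)\swapKernel(\itx,\itB)\,\rmd\itx = \int_{\itB}\bdpi(\ity)\swapKernel(\ity,\itA)\,\rmd\ity$
for arbitrary measurable $\itA,\itB \subseteq \Omega^L$. Because \eqref{eq:swap-kernel} splits naturally into a \emph{move} term $\sum_{i<j}\ind{\Tij(\itx)\in\itA}\pij(\itx)\alphaij(\itx)$ and a \emph{stay} term $\ind{\itx\in\itA}\sum_{i<j}\bigl(1-\alphaij(\itx)\bigr)\pij(\itx)$, it suffices to check the symmetry of each term separately.

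The stay term gives
$\int_{\itA\cap\itB}\sum_{i<j}\bigl(1-\alphaij(\itx)\bigr)\pij(\itx)\bdpi(\itx)\,\rmd\itx$,
which is manifestly invariant under swapping the roles of $\itA$ and $\itB$, so this contribution is handled by inspection. For the move term, for each fixed pair $i<j$ the transposition $\Tij$ is a measure-preserving involution on $\Omega^L$ (it merely permutes coordinates, hence has unit Jacobian). Applying the change of variables $\ity = \Tij(\itx)$ to the $(i,j)$-th summand on the left-hand side and comparing with the corresponding summand on the right-hand side reduces the claim to the pointwise identity
\[
\pij(\itx)\alphaij(\itx)\bdpi(\itx) \;=\; \pij(\Tij(\itx))\alphaij(\Tij(\itx))\bdpi(\Tij(\itx))
\]
for each $i<j$.

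To close the argument, I would introduce the shorthand
$r_{ij}(\itx) := \pij(\Tij(\itx))\bdpi(\Tij(\itx))/[\pij(\itx)\bdpi(\itx)]$
and verify by direct computation that $\bdpi(\Tij(\itx))/\bdpi(\itx) = \bigl(\pi(\itx_i)/\pi(\itx_j)\bigr)^{\beta_j-\beta_i}$, so that $\alphaij(\itx) = r_{ij}(\itx)\wedge 1$ agrees with \eqref{eq:swap-acceptance}. Since $\Tij\circ\Tij$ is the identity, one has $r_{ij}(\Tij(\itx)) = 1/r_{ij}(\itx)$; splitting into the cases $r_{ij}(\itx)\le 1$ and $r_{ij}(\itx)>1$ then yields the pointwise identity by the standard Metropolis--Hastings trick. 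The only mildly delicate point is careful bookkeeping in the change of variables and in summing over unordered pairs; the acceptance probability \eqref{eq:swap-acceptance} was precisely engineered so that no further obstacle arises.
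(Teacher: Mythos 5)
Your proposal is correct and follows essentially the same route as the paper: reduce to the per-pair ``move'' term, change variables via the measure-preserving involution $\Tij$, and close with the pointwise identity $\pij(\itx)\alphaij(\itx)\bdpi(\itx)=\pij(\Tij(\itx))\alphaij(\Tij(\itx))\bdpi(\Tij(\itx))$ guaranteed by \eqref{eq:swap-acceptance}. If anything you are slightly more complete than the paper, which leaves the rejection term and the Metropolis--Hastings cancellation $\bdpi(\Tij(\itx))/\bdpi(\itx)=\bigl(\pi(\itx_i)/\pi(\itx_j)\bigr)^{\beta_j-\beta_i}$ implicit, whereas you verify both explicitly.
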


\begin{proof}
    We need to show that for all $\itA, \itB \in \bigSigmaAlgebra$ we have
\[
    \int_\itAxB     \bdpi( \rmd\itx )   \swapKernel( \itx, \rmd \ity) =
    \int_\itAxB     \bdpi( \rmd\ity )   \swapKernel( \ity, \rmd \itx)
    \;.
\]

For all $i<j$ let us define 
\[
    \swapKernelij( \itx, \itA ) := \ind{ \Tij( \itx ) \in \itA } \pij( \itx ) \alphaij( \itx )\;.
\] 

It is enough to verify that for every $i<j$
\[
    \int_\itAxB     \bdpi( \rmd\itx )   \swapKernelij( \itx, \rmd\ity ) =
    \int_\itAxB     \bdpi( \rmd\ity )   \swapKernelij( \ity, \rmd\itx )\;.
\]

For any arbitrary chosen $i<j$ define a measure $\mu$ on $\real^{2dL}$ as follows: for $\itA, \itB \in \bigSigmaAlgebra$ let 
\[
    \mu( \itAxB )   := \lebesgue \Big( \big\{ \itx \in \itA \;:\; \Tij( \itx ) \in \itB ) \big\} \Big)\;,
\]
where $\lebesgue$ denotes the Lebesgue measure on $\real^{dL}$. 

Since $\Tij \big( \Tij( \itx ) \big) = \itx$, by symmetry of Lebesgue measure we get 
\begin{equation}\label{eq:symmetry-mu}
    \begin{split}
        \mu( \itAxB )   &= \lebesgue\Big( \itA \cap \Tij(\itB) \Big) = \\
                        &= \lebesgue\Big( \Tij(\itA) \cap \itB \Big) = \mu( \itBxA )\;,
    \end{split}
\end{equation}
and by definition of $\swapKernelij$ we obtain that
\begin{equation}\label{eq:proof_prop1}
    \begin{split}
        \int_\itAxB &\bdpi( \rmd\itx ) \swapKernelij( \itx, \rmd\ity )   =  \\
    &=  \int_\itAxB \bdpi( \itx ) \alphaij(\itx) \pij(\itx) \mu( \rmd\itx, \rmd\ity ) =\\ 
    &=  \int_{ \itA \cap \Tij(\itB) } \bdpi( \itx ) \alphaij(\itx) \pij(\itx) \rmd\itx             
    \end{split}
\end{equation}

Now, using \eqref{eq:swap-acceptance} we find that
\[
    \bdpi(\itx) \alphaij(\itx) \pij(\itx)   =  
    \bdpi\big( \Tij(\itx) \big) \alphaij\big( \Tij(\itx) \big) \pij\big( \Tij(\itx) \big)\;,
\]
and setting $\ity=\Tij(\itx)$ to \eqref{eq:proof_prop1} and applying \eqref{eq:symmetry-mu} we get
\[
    \begin{split}
        \int_\itAxB &\bdpi( \rmd\itx ) \swapKernelij( \itx, \rmd\ity )   =&  \\
    &=  \int_{ \Tij(\itA) \cap \itB } \bdpi( \itx ) \alphaij(\itx) \pij(\itx) \rmd\itx &= \\
    &=  \int_\itAxB \bdpi( \itx ) \alphaij(\itx) \pij(\itx) \mu( \rmd\ity, \rmd\itx ) &=\\           
    &=  \int_\itAxB \bdpi( \rmd\ity ) \swapKernelij( \ity, \rmd\itx ),  &\\
    \end{split}
\]
which completes the proof.
\end{proof}

\begin{remark}
 Thanks to the definition of kernel $\swapKernel$, for any positive measurable function $F:\real^{Ld}\to\real^+$ invariant by permutation we get $\swapKernel F(\itx) = F(\itx)$ which, under some regularity conditions \citep{BM2}, implies that \PT\, algorithm with state dependent swap steps is geometrically ergodic. For under the same assumptions \textsc{Theorem 1} precised in the above-mentioned \textit{\oe uvre} holds with state dependent swap steps. 
\end{remark}

To pursue an \textit{Equi Energy} type moves without a need to fine-tune some additional parameters, e.g. precising \textit{Energy Rings}, we propose to set the swap probabilities as follows: for $i<j$ let 
\begin{equation}\label{eq:def-strategy}
    \pij(\itx) \propto \exp\big\{ -| \log( \pi(\itx_i)-\log( \pi(\itx_j) | \big\}\;.
\end{equation}
The norming constant is equal $\sum_{i<j}\pij(\itx)$, so it does not depend upon permutation of $\itx$, thus $\pij(\itx)=\pij\big(\Tij(\itx)\big)$, and further acceptance probability simplifies to acceptance probability of standard Parallel Tempering algorithm
\begin{equation}\label{eq:acceptance-symetry}
    \alphaij(\itx) = \left( \frac{ \pi(\itx_i) }{ \pi(\itx_j) } \right)^{\beta_j-\beta_i}    \wedge 1\;.
\end{equation}

In comparison with standard PT, the state dependent swaps version propose more often states which are close in the sense of difference of energy. But, since acceptance mechanism is the same this leads to
increasing number of global moves in the algorithm. The simulations results, presented below, confirm this improvement.

Other possible strategies commonly used in the literature include swapping at random (RA) all possible pairs, thus setting $ \pij = { L \choose 2 }^{-1} $, and swapping at random only the adjacent levels (AL), where $\pij = (L-1)^{-1} \boldsymbol{1}_{\{i=j-1\}}$.  

\begin{remark}  
    Note, that PT with Equi-Energy moves can be considered a special case of state dependent swap step. Let $H_1,\dots,H_m$ be the Energy Rings. Denote by $H_x$ the set $H_i$ where $x$ belongs to. Setting
    $\pij(\itx) \propto \ind{ \itx_j \in H_{\itx_i} }$, the acceptance probabilities also reduce to \eqref{eq:acceptance-symetry}. In the end we obtain an algorithm proposed by \cite{ BaragattiParallelTemperingWithEquiEnergyMoves}.

    Therefore the theoretical results presented in the present paper can be directly applied to \textsc{PTEEM}. In particular, it is geometrically ergodic under the same regularity condition, also convergence and the Law of Large Numbers for \textsc{PTEEM} with adaptive Metropolis step at each level can be thus easily obtained.  
\end{remark}

\section{Adjusting number of temperature levels}\label{adjustingTemperatureLevels}

The choice of parameters, i.e. the schedule of temperatures and parameters of random walk Metropolis at each level, is crucial for performance of the \pt\, algorithm. This becomes apparent especially in case of the temperatures not assuming direct physical interpretation. 

In a recent paper \cite{BM2} has proposed an adaptive scheme to tune both temperature schedule and proposal distributions of the random walk steps at each temperature level. The above-mentioned algorithm still requires user-provided number of temperature levels and usually some pilot runs of the algorithm are necessary to determine their number.

In this section a simple criterion is presented for probing whether the algorithm has attained the correct number of temperature levels during its run-time. An incorrect temperature schedule leads to unsatisfactory estimation results: underestimating the number of temperature levels might lead the most-tempered chain to explore a distribution that is still multimodal, contradicting the whole idea of parallel tempering.
On the other hand, overestimating levels number might lead to other sort of inefficiencies. For, clearly, the waiting time for swaps between ground level chain and its more tempered counterparts will increase together with the overall number of temperatures. This is because if at each step we try to swap only two chains then the probability of proposing to swap the chain with original target with others decreases. It implies that the best candidate for maximal temperature is to take the smallest temperature at which the distribution becomes unimodal. Notice, that if more swaps occur between the tempered levels, then we do not really solve the problem of poor mixing at the ground level.

There are two possible solutions to the above-mentioned problem contingent upon our knowledge on $\pi$. If the number of modes is known in advance, one could simply observe the created trajectories and consider additional temperature levels whenever the algorithm did not arrive at finding some modes. Triming the temperature schedule could also be arranged if the distribution was known to be unimodal at more than one level. In general, the number of modes is unknown and such procedures would not be feasible, and the only safe solution remaining is to overestimate the number of levels and adaptively decrease it. That is precisely the approach we follow. In practice this leaves still one parameter to be provided by the user --- the initial number of temperatures. 


It is easy to construct toy problems with the temperatures extending over too limited a range. In such cases the \PT\,algorithm would not find other modes except the one where it was initialized.  In such cases nearly all  criterions to test the target multimodality would fail. To decide that on some level target distribution is unimodal or not we will use comparison of covariance matrices of proposal of random walk and of target.

The well known results on scaling of random walks Metro\-polis \citep{roberts-gelman-gilks-scaling} for i.i.d targets and further extensions for more general form of target distributions) choice of the optimal  covariance of proposal distribution.
This choice is to set proposal covariance equal $\frac{2.38^2}{d}\Sigma$, where $\Sigma$ is the covariance of the stationary distribution, which corresponds to stationary acceptance probability equal $0.234$.
There are two popular ways of adaptive schemes which tries to approximate this optimal choice. The first is to estimate on-line the stationary covariance by $\Sigma^{[n]}$ and in next step use proposal with a covariance $\frac{2.38^2}{d}\Sigma^{[n]}$ \citep{haario-saksman-tamminen-am}. A standard way to define $\Sigma^{[n]}$ follows a two step procedure\footnote{For simplicity we do omit the chain level index, $\ell$.}:
\begin{itemize}
    \item Get the estimator of the target's mean by 
    \begin{equation}\label{estimTargetMean}
        \mu^{[n]} = (1 - \gamma^{[n]}) \mu^{[n-1]} + \gamma^{[n]} X^{[n-1]}.      
    \end{equation}  
    \item Get the estimator of the target's covariance matrix
    \begin{multline}\label{estimTargetVar}
        \Sigma^{[n]} = (1 - \gamma^{[n]}) \Sigma^{[n-1]} +\\+ \gamma^{[n]} (X^{[n-1]} - \mu^{[n]})(X^{[n-1]} - \mu^{[n]})^\text{t}.
    \end{multline}
\end{itemize}
In \eqref{estimTargetMean} and \eqref{estimTargetVar} \par $\gamma^{[n]}$ denotes a deterministic step-size\footnote{A usual choice is $\gamma^{[n]} \propto n^{-\alpha}$, where $\alpha \in (0.5, 1)$.}.

The second one is to use proposal with covariance of the form $\exp(2\theta^{[n]})\Sigma^{[n]}$ , where $\theta^{[n]}$ is adjusted in order to get $0.234$ acceptance rate \citep{andrieu-thoms,roberts-rosenthal-examples} by the following a Robbins-Monro type procedure
\begin{equation}
    \theta^{[n]} = \theta^{[n-1]} + \gamma^{[n]} ( \eta^{[n-1]} - 0.234 ), 
\end{equation}
where $\eta^{[n-1]}$ denotes the acceptance probability of the random walk phase.

In the case of unimodal distribution both approaches are approximately equivalent. In the multimodal case it is obvious that $\exp(2\theta^{[n]})$ will be significantly lower than $\frac{2.38^2}{d}$. It is motivated by the need to shrink the covariance of proposal so that it mimics the covariance of the target distribution truncated to current mode, i.e. where the chain currently is at.    

The number of temperature levels is adjusted according to the following formula
\begin{equation}\label{eq:condition}
    L^{[n]} = \min \left\{ \ell \in \{1, \dots, L^{[n-1]} \}: 
        e^{\theta^{[n]}_\ell} \geq \frac{2.38}{\sqrt{d}}
      \right\},
\end{equation}

if the condition is satisfied for at least one $\ell$ and $L^{[n]} = L^{[n-1]}$, otherwise. 

It seems plausible to froze adaptation during initial burn-in period until the estimators of covariances $\Sigma^{[n]}_\ell$ and the scaling factors $e^{\theta^{[n]}_\ell}$ stabilise. 

\begin{remark}
    A similar criterion to that presented in \eqref{eq:condition} uses a different threshold, namely 

\[
    \min{\left\{ \frac{2.38}{\sqrt{d}}, 
    \underset{\ell = 1, \dots, L^{[n-1]}}{\max}{ 
        \left\{ 
            (1 - \epsilon)e^{\theta^{[n]}_\ell}
        \right    \} } \right\} 
    }.
\]

    The above condition is more lenient than the one presented in \eqref{eq:condition}, still being optimal when applied to the Gaussian target. In general $\frac{2.38^2}{d}\Sigma$ might be more than achieved using the optimal scaling, this being equal to $\frac{2.38^2}{I\,d}$ in case of a target density in product form, where $I$ denotes the Fisher information of the shift parameter \citep{roberts-gelman-gilks-scaling}. This stems from the Rao-Crammer lower bound, $I^{-1}~\leq~\Sigma$.

\end{remark}

\section{Algorithm}\label{algorithm}

We shall now pass to detailed description of our algorithm.

In the beginning we choose the initial positionings of all the chains arbitrary. We then apply iteratively the basic step of the \emph{Adaptive Parallel Tempering} as described in Algorithm \ref{algoDescription}. Each step depends on the outputs of the previous step through the positionings of chains in the state space, current estimator of the target covariance matrix and target expected value, the covariance scaling factor, and the details of the temperature scheme such as precise values of temperatures and their overall number. Also, a sequence of descending numbers $\gamma^{[n]}$ is being used.   

The basic step consists of five phases. 

In the first phase a simple random walk is carried out independently on every chain. The proposal for next step is drawn from the multivariate normal distribution with the input covariance matrix scaled by the $\exp(2\theta^{[n]})$. The proposals are being accepted or rejected following the standard procedure assuring the reversibility of the corresponding kernel.

In the second phase we perform random walk adaptation scheme. It consists in updating estimates of both the expected value and the covariance matrix of the target distribution. Also the proposal covariance scaling factor is being updated so that to make it dependent of the distance between the proposal acceptance calculated in the random walk phase and the theoretical optimal probability of acceptance, mentioned in Section \ref{adjustingTemperatureLevels}. Every update is done using a sequence of weights $\gamma^{[n+1]}$ constructed so as to give less and less attention to the sampled values while the algorithm proceeds. It is also worth mentioning that no explicit calculations are done using full covariance matrices. Instead, we operate on their Cholesky decompositions and update them using the so called \textit{rank-one update}, whose cost is quadratic with matrix dimension. Also, it is easier to draw the proposal in the \textit{Random Walk Phase} having the covariance matrix already decomposed.   

In the third phase random swaps between different chains are being performed according the rules described in Section \ref{stateDependentSwapStrategies}.
 
In the fourth phase the temperature scheme gets adapted. It's being done so as to have the ratio of differences between adjacent temperatures reflect the difference between the last steps probability of accepting the swap between the two levels and the theoretical value of $0.234$. The above-mentioned probability is calculated as if the swaps were drawn uniformly from the set of only neighbouring temperature levels alone. The rationale behind using the artificial acceptance rates stems from our requirement to make that decision independent of the number of temperature levels. In particular, the EE strategy is dependent upon the number of temperature levels (see Section~\ref{sec:numericalExperiments}). 

In the fifth phase we adapt the number of temperatures by cutting all the temperature levels with their corresponding square-root of the variance scaling factor being larger than the value of the optimal covariance of the proposal kernel.

\begin{algorithm}
\caption{One step of \emph{Adaptive Parallel Tempering}}
\begin{algorithmic}\label{algoDescription}
    \STATE \textbf{Input} $X^{[n]}, \Sigma^{[n]}, \theta^{[n]}, \mu^{[n]}, L^{[n]}, \beta^{[n]}$ 

    \STATE \,\,\,\,\emph{Random Walk Phase}
    \STATE $\ $
    \FOR {$\ell:=1$ to $L^{[n]}$ }
        \STATE $G \sim \mathcal{N}( 0, \Sigma^{[n]}_l )$
        \STATE $ Y_\ell := X_\ell^{[n]} + \exp{(\theta_\ell^{[n]})} \times G $
        \STATE $\log{\eta^{[n]}_\ell} := 0 \wedge  \beta_l^{[n]}
        \biggl( 
            \log\big(\pi(Y_\ell)\big) - \log\big(\pi(X_l^{[n]})\big) 
        \biggl)$  
        \IF{$\mathcal{U}(0,1) < \eta^{[n]}_\ell$} 
            \STATE $\tilde{X}^{[n]}_\ell := Y_\ell$
        \ELSE 
            \STATE $\tilde{X}^{[n]}_\ell := X^{[n]}_\ell $
        \ENDIF 
    \ENDFOR 
    \STATE \,\,\,\,\emph{Random Walk Adaptation}
    \STATE
    \FOR {$\ell:=1$ to $L^{[n]}$ } 
        \STATE\begin{align*}
            \mu^{[n+1]}_\ell &:= (1 - \gamma^{[n+1]}) \mu^{[n]}_\ell + \gamma^{[n+1]} \tilde{X}^{[n]}_\ell\\        
            \Sigma^{[n+1]}_\ell &:= (1 - \gamma^{[n+1]}) \Sigma^{[n]}_\ell \\ &\phantom{=}+ \gamma^{[n+1]} (\tilde{X}^{[n]}_\ell - \mu^{[n]}_\ell)(\tilde{X}^{[n]}_\ell - \mu^{[n]}_\ell)^\text{t}\\
            \theta^{[n+1]}_\ell &:= \theta^{[n]}_\ell + \gamma^{[n+1]} ( \eta^{[n]}_\ell - 0.234 )
        \end{align*} 
    \ENDFOR
     
    \STATE \,\,\,\,\emph{Random Swap Phase}
    \STATE
    \STATE Compute probabilities $\vec{p} := \left\{ 
        \pij( \tilde{X}^{[n]}_\ell ) 
    \right\}_{1 \leq i < j \leq L^{[n]}}$
    \STATE Sample $(i,j) \sim \vec{p} $
    \STATE $\alpha := 
    \frac{ \pij \big( \Tij( \tilde{X}^{[n]} ) \big) }{ \pij(\tilde{X}^{[n]}) } 
    \left( \frac{ \pi( \tilde{X}^{[n]}_i )}{ \pi( \tilde{X}^{[n]}_j ) } \right)^{ \beta^{[n]}_j-\beta^{[n]}_i }
    \wedge 1\;$
    \IF{$\mathcal{U}(0,1) < \alphaij$} 
        \STATE  $X^{[n+1]} := \Tij(\tilde{X}^{[n]} )$
    \ELSE 
        \STATE  $X^{[n+1]} := \tilde{X}^{[n]}$
    \ENDIF

    \STATE \,\,\,\,\emph{Temperature Scheme Adaptation}
    \STATE
    \STATE $T^{[n+1]}_1 = 1$
    \FOR{$\ell:=1$ to $L^{[n]} - 1$ }
        \STATE  
        \begin{align*}
            &\xi_\ell :=  \left( \frac{ \pi( X^{[n+1]}_\ell )}{ \pi( X^{[n+1]}_{\ell+1} ) } \right)^{ \beta^{[n]}_{\ell+1}-\beta^{[n]}_\ell } \wedge 1 \\
            &\log{(T^{[n+1]}_{\ell+1} - T^{[n+1]}_\ell)} := 
            \log{(T^{[n]}_{\ell+1} - T^{[n]}_\ell)} \\
            &\qquad\qquad\qquad\quad\qquad\qquad+ \gamma^{[n+1]} (\xi_\ell - 0.234)
        \end{align*}
    \ENDFOR
    \STATE $\beta^{[n+1]} := \frac{1}{T^{[n+1]}}$
    \STATE \,\,\,\,\emph{Number of Temperatures Adaptation}
    \STATE
    \IF{ $n > N_0$ }
        \STATE $L^{[n+1]} := \min \left\{ \ell \in \{1, \dots, L^{[n]} \}: 
        e^{\theta^{[n+1]}_\ell} \geq \frac{2.38}{\sqrt{d}}
      \right\}$
    \ELSE
        \STATE $L^{[n+1]} := L^{[n]}$
    \ENDIF    
\end{algorithmic}
\end{algorithm}


We now pass to questions regarding the theoretical underpinnings of the algorithm described above. The stability issues regarding parameters $\Sigma^{[n]}$, $\beta^{[n]}$, and $\mu^{[n]}$ are hard to analyse in general; thus, we proceed with examination thereof while restricting ourselves to the case where projections of parameters can be found altogether on some compact space. The precise description of the above-mentioned projections can be found in \citep{BM2}.

To obtain the Law of Large Numbers we need to replace \citep[Lemma 20]{supplement} by an analogous one which covers other swap strategies.
\begin{lemma}
    \label{lem:lipshitz}If swap strategy satisfies $\pij(x)=\pij(\Tij(x))$ for all $x$ and all $i<j$, and $\pij$ is Lipschitz continuous as a function of inverse
    temperatures $\beta$. 
    Then kernel $S$ is Lipschitz continuous in the $V$-norm with respect to the inverse temperature $\beta$.
\end{lemma}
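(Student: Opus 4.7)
The plan is to adapt the argument of \citep[Lemma 20]{supplement}, which covers the fixed swap-distribution case. The essential starting point is that the hypothesis $\pij(\itx) = \pij\bigl(\Tij(\itx)\bigr)$ reduces the acceptance probability \eqref{eq:swap-acceptance} to the state-independent form \eqref{eq:acceptance-symetry}, so that for any test function $f$ with $\|f\|_V \leq 1$,
\[
    \swap f(\itx) - f(\itx) = \sum_{i<j} \pij^\beta(\itx)\,\alphaij^\beta(\itx)\,\bigl(f(\Tij(\itx)) - f(\itx)\bigr).
\]
Writing $\mathcal{S}_{\beta'}$ for the analogous kernel at $\beta'$ and applying the add-and-subtract identity
\[
    \pij^\beta\alphaij^\beta - \pij^{\beta'}\alphaij^{\beta'}
    = \bigl(\pij^\beta - \pij^{\beta'}\bigr)\alphaij^\beta
    + \pij^{\beta'}\bigl(\alphaij^\beta - \alphaij^{\beta'}\bigr),
\]
the difference $\swap f(\itx) - \mathcal{S}_{\beta'} f(\itx)$ splits cleanly into two pieces which I would bound separately.

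The first piece is immediate: the Lipschitz-in-$\beta$ assumption on $\pij$ combined with $\alphaij \leq 1$ yields $|\pij^\beta(\itx) - \pij^{\beta'}(\itx)| \leq C\|\beta - \beta'\|$, and the resulting contribution is bounded by a constant multiple of $\|\beta - \beta'\|\,V(\itx)$, using that $V$ is invariant under the coordinate permutations $\Tij$ in the PT drift setup of \citep{BM2}.

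The second piece is the substantive one, and this is where the original argument of \citep[Lemma 20]{supplement} applies essentially verbatim. Setting $a = \pi(\itx_i)/\pi(\itx_j)$, a mean-value estimate gives
\[
    \bigl|a^{\beta_j - \beta_i}\wedge 1 - a^{\beta_j' - \beta_i'}\wedge 1\bigr|
    \leq |\log a|\,\bigl(a^{\beta_j - \beta_i}\vee a^{\beta_j' - \beta_i'}\bigr)\|\beta - \beta'\|,
\]
and the factor $|\log a|\,a^{\beta_j-\beta_i}$ is precisely the quantity that $V$ is chosen to dominate in the PT drift framework, uniformly over $\beta, \beta'$ confined to the compact projection set. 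Summing the two bounds over $i<j$ and taking the supremum in $f$ with $\|f\|_V \le 1$ yields the claimed Lipschitz continuity in the $V$-norm.

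The main obstacle, inherited from the original lemma, is this last estimate, since it leans on the specific form of $V$ used for PT in order to dominate log-ratios of $\pi$ together with power factors; all of the regularity on $\pi$ and on the projection set that appears in \citep{BM2} is needed here. The genuinely new ingredient is only the first piece, and the symmetry hypothesis $\pij(\itx) = \pij(\Tij \itx)$ is exactly what makes the two pieces decouple so cleanly: without it, $\alphaij$ would carry the extra factor $\pij(\Tij\itx)/\pij(\itx)$ from \eqref{eq:swap-acceptance} and the $p$-variation and $\alpha$-variation estimates could no longer be handled independently.
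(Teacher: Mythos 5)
Your proposal is correct and follows essentially the same route as the paper: both decompose $S_\beta f - S_{\beta'} f$ over the pairs $i<j$, split the product variation $\pij\alphaij - \pij'\alphaij'$ into a $\pij$-term and an $\alphaij$-term using that both factors are bounded by one, handle the first by the Lipschitz hypothesis on $\pij$, and reduce the second to the argument of Lemma 20 of \citep{supplement} (which applies because the symmetry hypothesis collapses $\alphaij$ to the standard form \eqref{eq:acceptance-symetry}). Your added remarks on the mean-value estimate and on why the symmetry hypothesis is what lets the two pieces decouple are consistent with, and somewhat more explicit than, the paper's own proof.
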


\begin{proof}
    To explicitly denotes dependence upon $\beta$ we will use  notation $S_\beta$, $\alphaij(x,\beta)$, and $\pij(x,\beta)$ for $S$, $\alphaij$ and $\pij$ respectively.
     By definition of kernel $S_\beta$ for any measurable function $g$ and any $\beta,\beta'$ we have
     \begin{align*}
        \label{eq:S-V}
    &|S_{\beta}g(x) -S_{\beta'}g(x)|\\
    &=\Big|\sum_{i<j}\left(\pij(x,\beta)\alphaij(x,\beta)-\pij(x,\beta')\alphaij(x,\beta')\right)g(\Tij(x))\\
    &\phantom{=}+\sum_{i<j}\left(\pij(x,\beta')\alphaij(x,\beta')-\pij(x,\beta)\alphaij(x,\beta)\right)g(x)\Big|\\
    &\leq\sum_{i<j}\Big|\left(\pij(x,\beta)\alphaij(x,\beta)-\pij(x,\beta')\alphaij(x,\beta')\right)\Big| \bar{g}_{ij}(x),
    \end{align*}
    where $\bar{g}_{ij}(x)=\max\{|g(x)|,|g(\Tij)(x)|\}$.
    Since $\pij(x,\beta)\leq1$ and $\alphaij(x,\beta)\leq 1$, by triangle inequality we get
    \begin{multline*}
     \Big|\left(\pij(x,\beta)\alphaij(x,\beta)-\pij(x,\beta')\alphaij(x,\beta')\right)\Big|\\ \leq|\pij(x,\beta)-\pij(x,\beta')|+|\alphaij(x,\beta)-\alphaij(x,\beta')|
     \end{multline*}
    Terms whit differences of $\pij$ are Lipshitz by assumptions the Lipshitz conditions for terms with $\alphaij$ follows by the same arguments as in proof of Lemma 20 from
    \citep{supplement}
\end{proof} 

\goodbreak
\begin{theorem}
    Under assumptions of Lemma~\ref{lem:lipshitz} and regularity conditions as in Theorem 5 of \citep{BM2} for any function $f$ such that $\sup_x |f(x)\pi(x)^\tau|<\infty$, where $\tau<\frac{\beta_{\text{min}}}{2}$ with $\beta_{\text{min}}$ being the lower bound of the compact set of allowed temperatures, it is true that
     \[
     \frac{1}{n}\sum_{k=1}^nf(X_1^{[k]})\to \pi f \quad \text{a.s. ,}
     \]
    where $X_1^{[n]}$ is a first component of APT algorithm with fixed number of the temperatures levels.
\end{theorem}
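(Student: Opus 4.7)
The plan is to adapt the proof of Theorem~5 in \citep{BM2} to the present setting with state dependent swaps. That theorem reduces the Law of Large Numbers for the adaptive algorithm to a short list of standard ingredients: (a) $\bdpi$-invariance of the composite kernel $\walkKernel \swapKernel$ with adaptation frozen; (b) a uniform-in-parameter drift condition giving geometric ergodicity against a drift function $V$ of the form $V(x) \propto \prod_{\ell=1}^L \pi(x_\ell)^{-\tau}$; (c) Lipschitz continuity of the transition kernel in the adaptation parameters, measured in the $V$-norm; (d) diminishing adaptation, which is built into the choice $\gamma^{[n]} \to 0$ together with boundedness of the parameter increments; and (e) stability of the adapted parameters in a compact set, enforced by the projection device of \citep{BM2}. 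Since the number $L$ of temperature levels is frozen by hypothesis, no additional bookkeeping for the adaptation of $L$ is needed.

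For (a), Proposition~1 above gives $\bdpi$-reversibility of $\swapKernel$, so $\bdpi$ is invariant for $\walkKernel\swapKernel$. For (b), I would exploit the fact that $V$ is invariant under coordinate permutations, which combined with the remark following Proposition~1 yields $\swapKernel V = V$. Consequently the drift inequality already proved for $\walkKernel$ in \citep{BM2} transfers verbatim to $\walkKernel\swapKernel$, with identical constants; this step is completely insensitive to the choice of state dependent probabilities $\pij(\cdot)$. The growth condition $\tau < \beta_{\text{min}}/2$ imposed on $f$ is exactly what guarantees $f \in L^1(\bdpi)$ together with the required moment bound against $V$, and is inherited unchanged from \citep{BM2}.

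The only place where the original argument breaks down is ingredient (c): the state dependence of $\pij$ makes the swap kernel itself depend on the adaptation parameter through $\beta$, which was not the case in \citep{BM2}. This is precisely the gap filled by Lemma~\ref{lem:lipshitz}. Substituting it for \citep[Lemma 20]{supplement} everywhere in the proof of Theorem~5 in \citep{BM2} recovers the $V$-norm Lipschitz estimate for $\swapKernel_\beta$ that the argument needs. The remainder of the proof — combining this Lipschitz bound with the drift condition to control differences of resolvents $\|P_\theta - P_{\theta'}\|_V$, and then invoking the Fort--Moulines--Priouret type a.s.\ convergence result for adaptive MCMC that underlies Theorem~5 of \citep{BM2} — is formal and requires no new input.

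The main technical obstacle is therefore already concentrated in Lemma~\ref{lem:lipshitz}. A subsidiary concern is uniformity: the Lipschitz constants must depend only on the compact set of allowed parameters and not on the chain trajectory. Since $\pij(\cdot,\beta)$ is assumed Lipschitz in $\beta$ and the acceptance ratio $\alphaij$ is controlled by exactly the same estimates as in \citep{supplement}, this uniformity is inherited without further work, and the theorem follows by direct substitution into the proof template of \citep{BM2}.
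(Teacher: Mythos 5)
Your proposal is correct and follows essentially the same route as the paper: the authors likewise reduce the theorem to the proof of Theorem~5 of \citep{BM2}, with the single modification of substituting Lemma~\ref{lem:lipshitz} for \citep[Lemma 20]{supplement} to obtain the $V$-norm Lipschitz continuity of the state-dependent swap kernel in $\beta$. Your write-up merely spells out the ingredients (invariance, drift via $\swapKernel V = V$, diminishing adaptation, compactness by projection) that the paper leaves implicit.
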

\begin{proof}
    The proof can be carried out using the same arguments as the one to Theorem 5 of \citep{BM2}, with only one minor modification consisting in explicit use of the Lipschitz continuity of swap kernel given by Lemma~\ref{lem:lipshitz}.
\end{proof}

\begin{remark}
    It is true the number of temperatures is a non-increasing function of the iteration and hence almost every trajectory has asymptotically a constant number of levels. This fact suggests that the LLN might be satisfied with adaptation of the temperatures number. 
\end{remark}

\section{Numerical Experiments}\label{sec:numericalExperiments}

We have carried out computer simulations to test the functioning of our algorithms. In this section we propose two case studies of application of our algorithms together with a deepened analysis or the results obtained with them. Both of the provided examples are inherently multimodial.

Our first object of interest was testing the efficiency of our algorithm on an example with a controlled number of modes. An adequate toy-example used in literature as benchmark for testing MCMC algorithms comes from the article of \citet{liang-wong}. There, a mixture of 20 normal peaks with equal weights is being considered. The variances of these modes are small enough to keep the the spikes well separated, as might be inspected in Figure~\ref{fig:20peaks}. 

\begin{figure}
\centering
\caption{Twenty Gaussian Peaks Toy Example}\label{fig:20peaks}
 \includegraphics[width=8cm]{./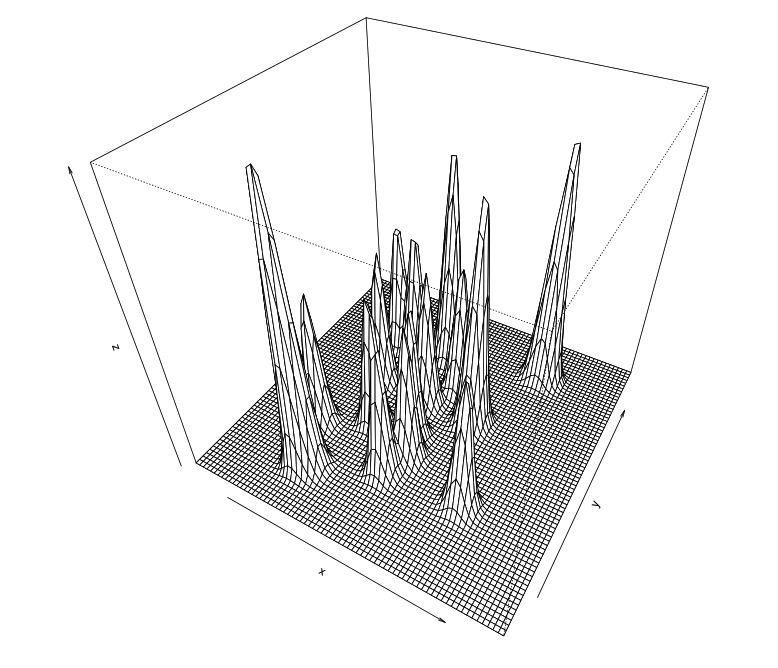}
\par
 \caption*{\small Observe that some peaks form more dense probability clusters that are still separated by large bands of almost zero probability regions.}
\end{figure}

\begin{table}[ht]
\centering
\caption{\textit{Root Mean Square Error} of estimators of the two first moments.}\label{tab:mseEE2d}
\begin{tabular}{lcccc}
  \toprule
RMSE & $\EV X_1$ & $\EV X_2$ & $\EV X_1^2$ & $\EV X_2^2$ \\ 
  \midrule
L=2 & 0.87 & 1.16 & 13.71 & 11.25 \\ 
 L=3 & 0.36 & 0.50 & 9.24 & 4.96 \\ 
  L=4 & 0.33 & 0.41 & 9.25 & 4.32 \\ 
  L=5 & 0.33 & 0.41 & 9.05 & 4.16 \\ 
  \bottomrule
\end{tabular}
\par
\bigskip
\caption*{\small Calculated for 2D Gaussian mixtures toy target using Equi-Energy type swap strategy for
a different number of temperature levels.}
\end{table}

We have performed simulations (500 runs) to check the quality of the algorithms. Since the actual first and second moments of this distribution are readily obtainable by direct calculation, we could evaluate the {\it Root Mean Square Error} of the moments estimators, as gathered in Table~\ref{tab:mseEE2d}, and compare ratios thereof while using different swap strategies, see Table~\ref{tab:ratio_rmse_2d}. Every simulation consisted of 2,500 iterations of burn-in followed by 5,000 steps of the main algorithm.
\begin{table*}[ht]
\centering
  \captionsetup{justification=centering}
  \caption{The efficiency of different algorithms in discovering modes of a 2D target distribution.}\label{tab:modes2d} 
\begin{tabular}{lcccccccccc}
  \toprule
&& \multicolumn{3}{c}{$L=3$} & \multicolumn{3}{c}{$L=4$} & \multicolumn{3}{c}{$L=5$} \\
\cmidrule(r){3-5}\cmidrule(r){6-8} \cmidrule(r){9-11}
Simulations' Feature&$L = 2$&EE&AL&RA&EE&AL&RA&EE&AL&RA\\
\midrule
  No missing modes (\%)   & 62.8  & 99.8  & 99.6  & 99.8  & 100 & 99.2  & 99.4  & 100 & 99.4  & 98.8  \\ 
b  Mean Absolute Error     & 0.59  & 0.32  & 0.34  & 0.35  & 0.3 & 0.36  & 0.37  & 0.29& 0.38  & 0.4   \\ 
  \bottomrule
\end{tabular}
\par
\bigskip
\caption*{\small Summary of the results of the \PT\, algorithm applied to the 2D Gaussian mixtures toy target with number of temperature levels equal to $L \in \{ 2,3,4,5\}$. Three different swap strategies were tested:
Equi Energy type moves (EE), swapping only two adjacent levels (AL), and swapping between all levels at random (RA). The {\it Mean Absolute Error} of time spent in each mode is approximated by calculating $\frac{1}{20}\sum_{i=1}^{20}\frac{|t_i-0.05|}{0.05}$, $t_i$ being the proportion of time
spent in $i^\text{th}$ mode.}
\end{table*}

\begin{table}[ht]
\centering
\caption{Ratio of RMSE of estimators of the two first moments}\label{tab:ratio_rmse_2d}
\begin{tabular}{lccccc}
  \toprule
Temperatures $\text{N}^\text{o}$  & Ratio & $\EV X_1$ & $\EV X_2$ & $\EV X_1^2$ & $\EV X_2^2$ \\ 
  \midrule
  \multirow{2}{*}{$L = 3$}&  AL/EE     & 1.04   & 0.96  & 1.01  & 0.97 \\ 
                          &  RA/EE     & 1.04   & 0.97  & 1.05  & 1.00 \\ 
  \cmidrule(r){1-6}
  \multirow{2}{*}{$L = 4$}&  AL/EE     & 1.13   & 1.27  & 1.00  & 1.21 \\ 
                          &  RA/EE     & 1.23   & 1.20  & 1.05  & 1.18 \\ 
  \cmidrule(r){1-6}                         
  \multirow{2}{*}{$L = 5$}&  AL/EE     & 1.30   & 1.31  & 1.06  & 1.27 \\ 
                          &  RA/EE     & 1.24   & 1.30  & 1.06  & 1.30 \\ 
  \bottomrule
\end{tabular}
\par
\bigskip
\caption*{ \small Calculated for 2D Gaussian mixtures target for different number of temperatures and different choice of swap strategy. Confront Table~\ref{tab:modes2d} for swap strategy encodings.} 
\end{table}
We have also modified the upper mentioned example by considering the product of that distribution with six independent uniform distributions. The final distribution is composed therefore of 20 peaks that are being stretched to additional dimensions.  The purpose of that experiment was to test how our algorithms cope with the difficulty of exploring a highly multidimensional space. Results of numerical experiments (100 runs) are contained in Table~\ref{tab:modes8d}. Each run was preliminated by 5,000 steps of burn-in, followed by 10,000 steps of the proper algorithm. 

\begin{figure}
{\centering
\caption{Twenty Gaussian Peaks 2D Simulation}\label{fig:lev2d}
 \includegraphics[width=8cm]{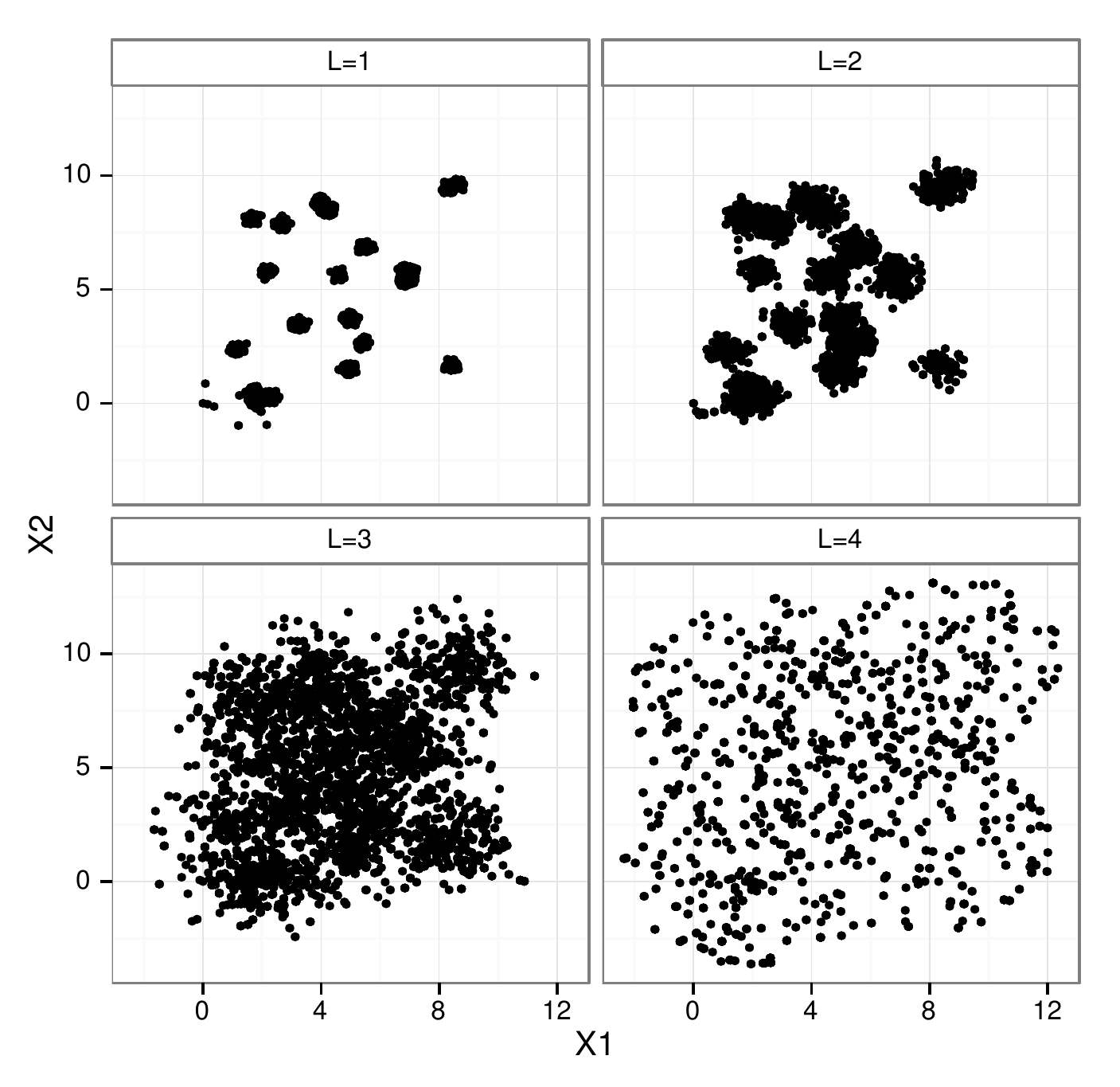}}
 \par
\caption*{\small Here we plot one individual trajectory. Each subplot represents the natural projection of that trajectory on the state space containing the original multimodial distributions and their transforms. The simulation lasted 7500 iterations, preceded by 2500 iterations of burn-in. The criterion for temperature scheme reduction was not set on until the burn-in period has finished. When turned on, it reduced the problem to three levels out of original four. } 
\end{figure}

\begin{figure}
\centering
\caption{Twenty Gaussian Cylinders 8D Simulation}\label{fig:lev8d}
 \includegraphics[width=8cm]{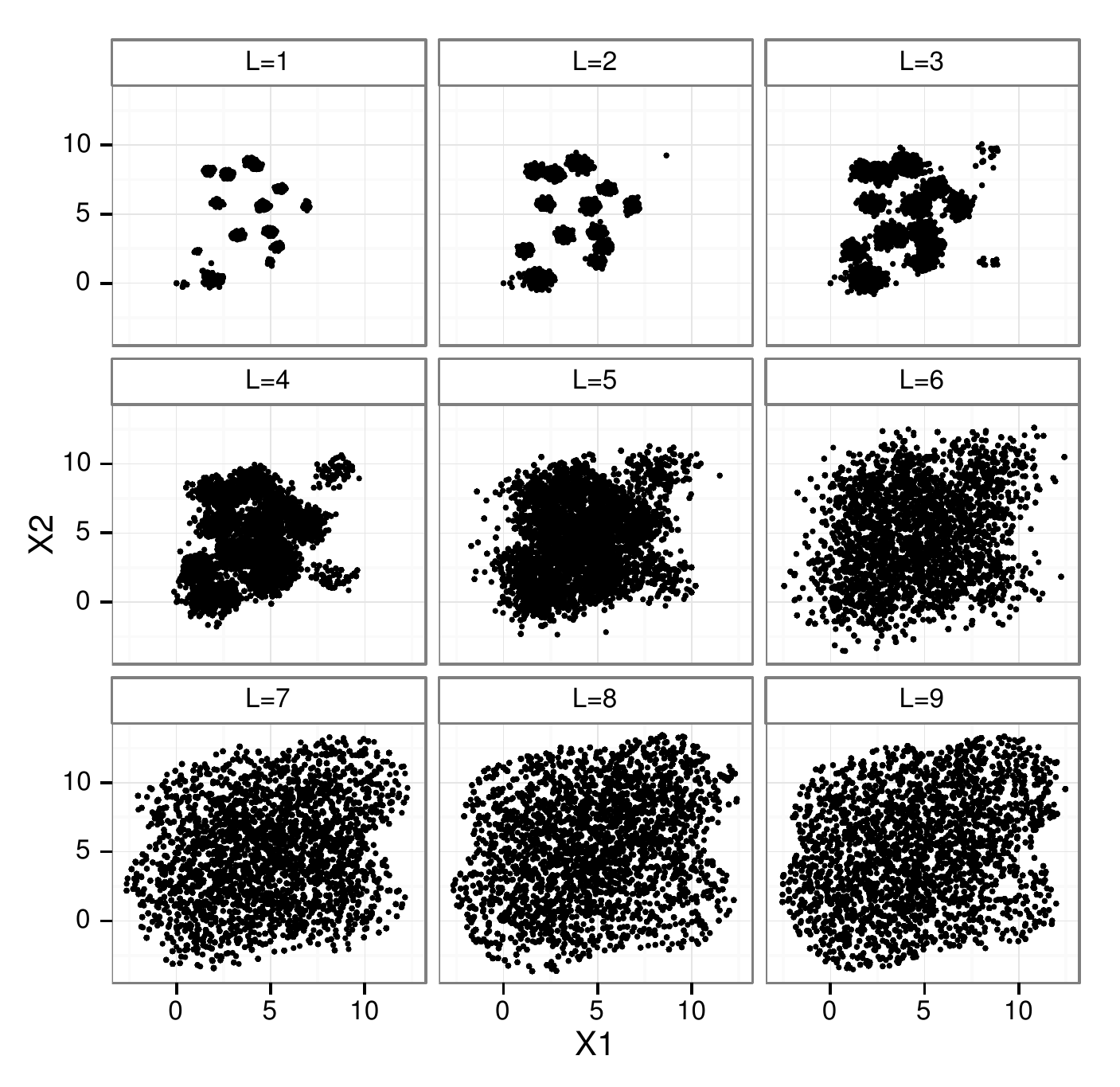}
 \par
 \caption*{\small Here we plot one individual trajectory. Each subplot represents the natural projection of that trajectory on the state space containing the original multimodial distributions and their transforms. 

 The simulation lasted 15,000 iterations, preceded by 5,000 iterations of burn-in. The criterion for temperature scheme reduction was not set on until the burn-in period has finished. When turned on, it reduced the temperature scheme usually to four or five levels. When the criterion has been turned on after 7,500 iterations, it indicated always that one should consider five levels.}
\end{figure}

To see how an exemplary trajectory manages its search for probability modes confront Figures~\ref{fig:lev2d} and \ref{fig:lev8d}. It can be clearly seen that not at all temperature levels can the algorithm easily travel between different modes.

It was of interest to us how well do the algorithms perform when dealing with the inherent multimodiality of the upper-mentioned problems. In case of two first problems mentioned above we do know the number of modes a priori and checking whether an algorithm managed to explore a particular mode could be dealt with use of clusterings of the state space into regions representing particular modes. Tables \ref{tab:modes2d} and \ref{tab:modes8d} do present the efficiency of estimates as measured by the expected number missed modes during one run of the algorithm and also the frequency of case when the algorithms does not miss any mode.
\begin{table*}[ht]
\centering
\caption{The efficiency of different algorithms in discovering modes of a 8D target distribution.}\label{tab:modes8d}
\begin{tabular}{lcccccccccccc}
  \toprule
  &\multicolumn{3}{c}{$L=3$} & \multicolumn{3}{c}{$L=5$} & \multicolumn{3}{c}{$L=7$} & \multicolumn{3}{c}{$L=9$}\\
  \cmidrule(r){2-4}\cmidrule(r){5-7} \cmidrule(r){8-10} \cmidrule(r){11-13} 
 Simulations' Feature&EE&AL&RA&EE&AL&RA&EE&AL&RA&EE&AL&RA\\
 \midrule
 No missing modes (\%) & 0 & 0 & 0 & 17.3 & 16.3 & 5.1 & 17.3 & 11.2 & 3.06 & 19.4 & 6.12 & 0 \\ 
  Average number of missing modes & 16 & 16 & 16 & 1.6 & 1.6 & 2.7 & 1.8 & 2.2 & 3.4 & 1.6 & 2.4 & 5.2 \\ 
  Mean Absolute Error & 1.7 & 1.7 & 1.7 & 0.78 & 0.79 & 0.89 & 0.79 & 0.86 & 0.93 & 0.8 & 0.85 & 1.1 \\ 
   \bottomrule
\end{tabular}
\par
\bigskip
\caption*{\small Summary of the results of the \PT\, algorithm applied to the 8D Gaussian mixtures toy target with number of temperature levels equal to $L \in \{ 3,5,7,9\}$. Confront Table~\ref{tab:modes2d} for swap strategies' encodings and definition of the \textit{Mean Absolute Error}.}

\end{table*}

We have run the algorithms in two modes. In first mode we have neglected temperature scheme reduction (but maintained the temperature adaptation as such). These simulations were carried out to provide insight into how the initial number of temperatures affects different strategies results. The results for the 2D toy example are gathered in \mbox{Tables \ref{tab:modes2d}-\ref{tab:ratio_rmse_2d}}. It can be generally observed that the \textit{Equi Energy} (EE) strategy slightly outperforms the state-independent swap strategies, the difference getting bigger with more temperature levels being taken into account. Clearly this can be attributed to the fact that a state-dependent swap is more immune to the quadratic growth in the number of possible swaps: choosing a swap uniformly from all potential swaps renders exchanges between the base level and other levels less likely in overall. On the other hand, choosing only adjacent levels (AL) seems to be only slightly worse. 

We have therefore tumbled upon an important rationale for temperature scheme adaptation as such: if the probability of proposed swaps is not concentrated on a relatively small subset, then it is less likely for the algorithm to convey information about a new mode discovery to the base chain. Observe however, that one cannot \textit{a priori} underestimate the proper number of temperatures; for in such case the tempered chains could be still exploring poorly separated probability modes of the transformated target distribution. For instance, in Figure~\ref{fig:moments} one can explicitly see that there is a huge difference between the errors in first two moments estimates for the 8D toy example when one uses 3 temperature levels instead of 4 or more. Similarly, in Table~\ref{tab:modes2d} one notices that AL and RA strategies do perform worse with more temperature levels considered, for on average the number of modes that they miss slightly grows. It is completely different in case of the state-dependent EE strategy that works actually better with the growing number of temperatures. The supremacy of this state-dependent strategy can be clearly seen in Table~\ref{tab:ratio_rmse_2d} that presents direct comparison of the \textit{Root Mean Square Errors} of other strategies when compared to that of the EE strategy. \mbox{Tables \ref{tab:modes8d}-\ref{tab:ratio_rmse_8d}} show similar results for the 8D toy example. 
\begin{table*}[ht]
\centering
\caption{Ratio of the \textit{Root Mean Square Error} of estimators of the two first moments.}\label{tab:ratio_rmse_8d}
\begin{tabular}{lcccccccccccccc}
  \toprule
 & \multicolumn{2}{c}{L=3}&\multicolumn{2}{c}{L=4}&\multicolumn{2}{c}{L=5}&\multicolumn{2}{c}{L=6}&\multicolumn{2}{c}{L=7}&\multicolumn{2}{c}{L=8}
 &\multicolumn{2}{c}{L=9}\\ 
\cmidrule(r){2-3}\cmidrule(r){4-5}\cmidrule(r){6-7}\cmidrule(r){8-9}\cmidrule(r){10-11}\cmidrule(r){12-13}\cmidrule(r){14-15}
Moment&AL/EE&RA/EE&AL/EE&RA/EE&AL/EE&RA/EE&AL/EE&RA/EE&AL/EE&RA/EE&AL/EE&RA/EE&AL/EE&RA/EE\\
  \midrule
  $\EV X_1$ & 1.00 & 1.00 & 1.08 & 1.02 & 0.97 & 1.03 & 1.03 & 0.94 & 1.33 & 1.38 & 0.95 & 1.14 & 1.03 & 1.16 \\ 
  $\EV X_2$ & 1.01 & 1.02 & 1.07 & 1.00 & 1.15 & 0.98 & 0.97 & 1.06 & 1.12 & 1.11 & 0.95 & 1.35 & 1.13 & 1.51 \\ 
  $\EV X_1^2$ & 1.00 & 1.00 & 1.02 & 1.01 & 0.93 & 1.05 & 0.96 & 0.91 & 1.04 & 1.02 & 0.96 & 1.08 & 1.01 & 1.08 \\ 
  $\EV X_2^2$ & 1.01 & 1.01 & 0.95 & 1.00 & 1.09 & 0.97 & 1.04 & 1.11 & 1.15 & 1.24 & 1.01 & 1.43 & 1.05 & 1.55 \\ 
  \bottomrule
\end{tabular}
\par
\bigskip
\caption*{\small Calculated for 8D Gaussian mixtures target for different number of temperatures with different swap strategies, at different temperature levels. Confront Table~\ref{tab:modes2d} for swap strategy encodings. } 
\end{table*}

Table~\ref{tab:acceptanceRates} shows also another interesting phenomenon: the overall percentage of accepted swaps grows with the number of temperature levels when applying the EE strategy. It suggests that the algorithm performs well in finding points in the state-space with similar probabilities. It also suggests that that truly the temperature adaptation procedure should not be using the original probabilities of acceptances but some other quantities independent of the number of initial temperatures. This provides a rationale for our choice of probabilities that would result from application of any swap proposal kernel that is symmetric (see the description of the fourth phase of the algorithm in Section \ref{algorithm}).       
\begin{table}[ht]
  \centering
  \caption{Average acceptance rate of swap steps.}\label{tab:acceptanceRates}
  \begin{tabular}{lccccccc}
    \toprule
  Strategy & L=3 & L=4 & L=5 & L=6 & L=7 & L=8 & L=9 \\ 
    \midrule
  EE & 0.38 & 0.42 & 0.45 & 0.45 & 0.46 & 0.46 & 0.46 \\ 
    RA & 0.16 & 0.12 & 0.10 & 0.08 & 0.07 & 0.06 & 0.05 \\ 
    \bottomrule
  \end{tabular}
  \par
  \bigskip
  \caption*{\small Calculated for 8D Gaussian mixtures target for Equi-Energy type swap strategy and random swap strategy  at different temperature levels Confront Table~\ref{tab:modes2d} for swap strategy encodings.}
\end{table}

The general message from the above analysis is clearly that there exists a threshold number of temperature levels that significantly ameliorates the \PT\, algorithm's performance and that the state-dependent strategy might partially solve problems resulting in unlikely swaps with the base level chain. We shall now describe the second phase of our numerical experiments with the toy examples, namely the verification of how well does our temperature scheme reduction criterion works in practice. 

To this end we have carried out yet another series of experiments (100 runs). In the 2D toy example (7,500 steps, out of which 2,500 burn-in) the ultimate number of temperature levels in every simulation reached three. In the 8D example the result depended on the burn-in period: if out of 15,000 steps the burn-in amounted to 7,500 then the algorithm always reached five levels of temperatures. When given a shorter burn-in period in $85\%$ of cases it reached five levels, otherwise descending even to four. It seems therefore that the algorithm approaches levels that could have been intuitively chosen and, what is important, in a more conservative way: it simply at worse slightly overestimates the actual level that might have been chosen by visual inspection of the "scree plots" such as the ones depicted in Figure~\ref{fig:moments}.  
\begin{figure}
\centering
\caption{Root mean square error of estimators of first and second moments}\label{fig:moments}
 \includegraphics[width=8cm]{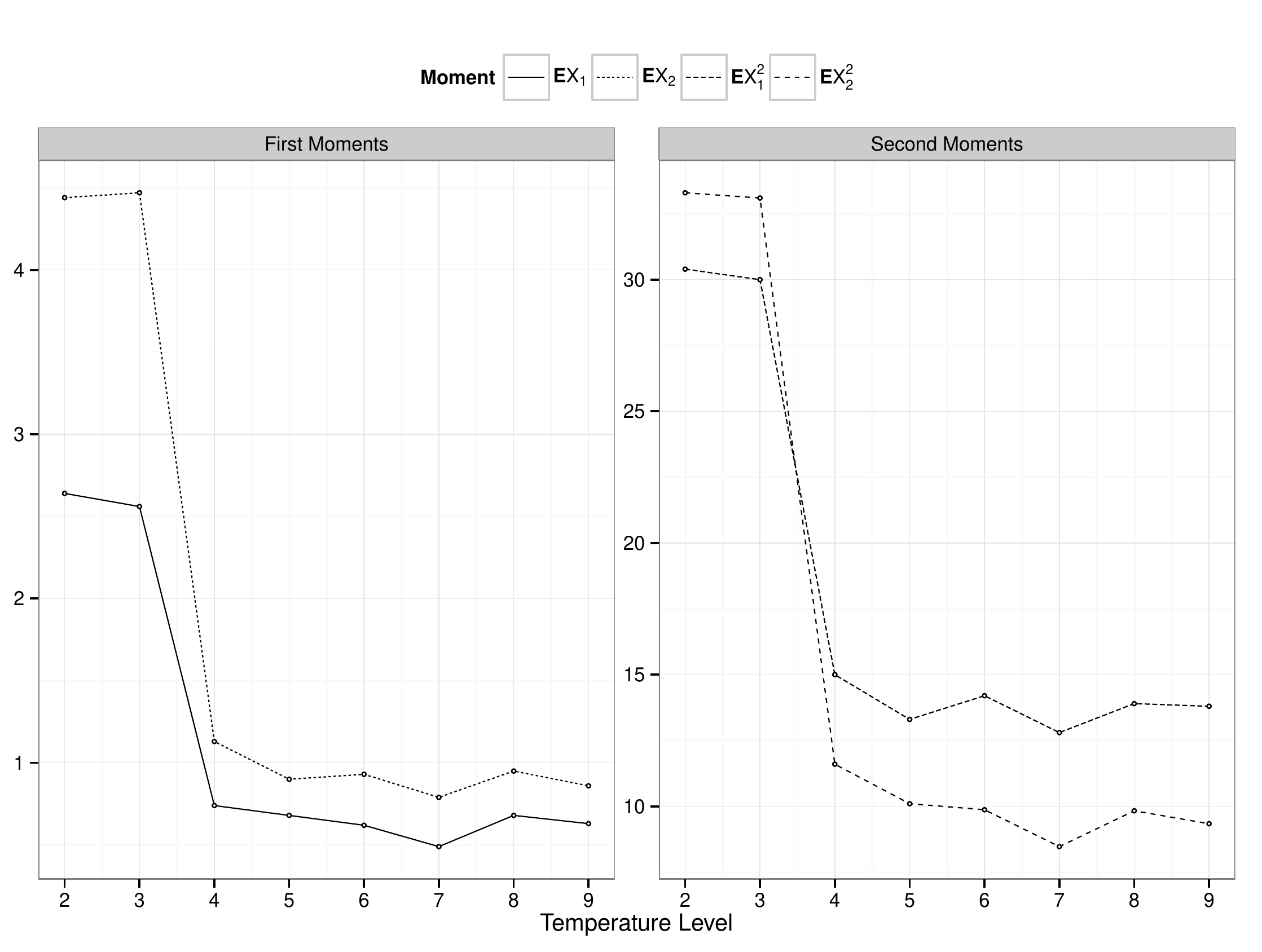}
 \par
 \caption*{\small Calculated for 8D Gaussian mixtures target for different number of temperatures with Equi-Energy type swap strategy at different temperature levels. Apparently there exists a threshold value of the initial number of temperatures that renders the estimates much more accurate. Not knowing it a priori, we are bound to overshot the true value to obtain high quality estimates. This is also a rationale behind any automated temperature reduction algorithm.}
\end{figure}

To show that the algorithms developed in this article can be of use in applications, we have tested them on the problem of the estimation of linear model coefficients. The formulation of the problem comes from \citet{park-casella} and is known in the literature under the name of \textit{Bayesian Bridge} regression. It consists in performing Bayesian inference on the parameters of a standard linear model
\begin{equation*}
  \bdy = \mu \boldsymbol{1}_n + \boldsymbol{X} \boldsymbol{\beta} + \boldsymbol{\epsilon},
\end{equation*}
where one postulates that $\boldsymbol{\epsilon} \sim \mathcal{N}(\boldsymbol{0}_n, \Sigma)$, resulting in 
\begin{equation*}    
  \bdy \sim \mathcal{N}(\boldsymbol{X} \boldsymbol{\beta}, \sigma \mathbb{I}_n),     
\end{equation*}      
and additionally one poses a shrinkage prior on the $\boldsymbol{\beta}$ coefficients 
\begin{equation*}    
  \pi(\boldsymbol{\beta}) = \prod_{j=1}^p \frac{\lambda}{2} e^{-\lambda |\beta_j|^q}.
\end{equation*}      
Here, $p$ is the number of different features gathered in the data set $X$, and $n$ is the overall number of sample points. Finally, one poses a scale invariant prior on the model variance $\sigma \sim \pi( \sigma ) = \frac{1}{\sigma^2}$. 
There is an explicit relation between the \textit{maximum a posteriori} estimate of parameters in the above mentioned model and the solution to the problem of coefficients estimates provided by the Lasso algorithm \citep{tibshirani-LASSO-seminal}. 
The \textit{a priori} independence of the $\boldsymbol{\beta}$ from $\sigma^2$ might render the $\boldsymbol{\beta}$ \textit{a posteriori} distribution multimodial, as in the case studied herein - following \citet{park-casella} we have run the algorithms on the \textsc{diabetes} dataset taken from \citet{efron-hastie-tibshirani}. 

In this example the starting point for the algorithms (namely - a pair $(\boldsymbol{\beta}^{[0]}, \sigma^{[0]})$) was set to be a simple OLS estimate for the coefficients $\hat{\boldsymbol{\beta}}$ together with the resulting mean residual sum of squares being a proxy for real variance, $\hat{\sigma}$. We have carried out one hundred simulations each consisting of 20,000 iterations of burn-in and 100,000 iterations of the actual algorithm, for different initial number of temperatures.  
\begin{table}[ht]
  \centering
  \caption{Posterior Means of selected Parameters of the \textit{Bayesian Bridge} regression}\label{tab:bayesianLassoParams}
  \begin{tabular}{lcccccc}
  \toprule
   & $\beta_3$ & $\beta_9$ & $\log(\sigma)$ \\ 
    \midrule
  RWM   & 0.228 (0.64)    &  0.165  (0.390)   & 8.69 (0.003)   \\
    EE  & 723   (2.60)     &  0.240  (0.150)   & 8.3  (0.0031)  \\ 
    AL  & 724   (2.60)     &  0.199  (0.050)   & 8.3  (0.0027)  \\
    RA  & 723   (4.40)     &  0.141  (0.067)  & 8.3  (0.0028)  \\
    \bottomrule
  \end{tabular}
  \par
  \bigskip
  \caption*{\small Summary of the results of the \textit{Bayesian Bridge} regression. It can be seen that the simple random walk Metropolis (RWM) strategy does not succeed in exploring all the modes. When it comes to other strategies, no significant differences can be spotted.}
\end{table}
\begin{figure}
\centering
\caption{Posterior Distribtuion of Coefficients in Bridge Regression}\label{fig:bayesianBridgeRegression}
 \includegraphics[width=8cm]{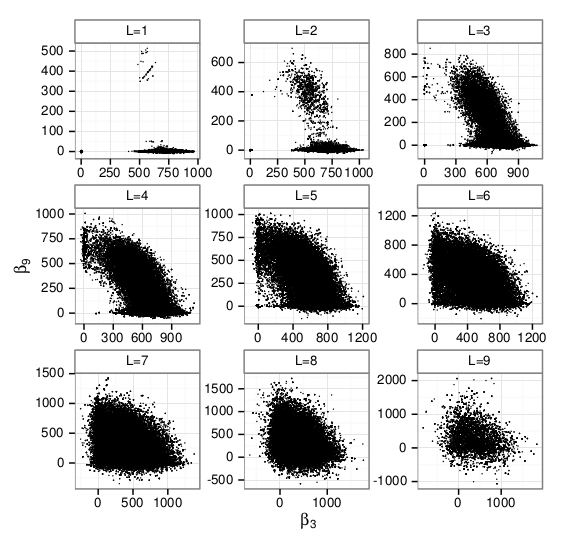}
\caption*{\small Above, projections of one simulated chain on the subspace spanned by two parameters in the Bayesian Lasso model. Visual inspection of base temperature ($L=1$) reveals the inherent multimodiality of the \textit{a posteriori} distribution. The simulations were performed on nine different temperature levels. One notices visually that at temperature ranging from the $6^\text{th}$ and $7^\text{th}$ level the modes are not easily recognisable, which implies that the corresponding chains no longer experience local behaviour. Also our criterion (cerified every 1000 steps of the algorithm) did cut off levels higher than 7.}
\end{figure}
The results of our simulations are represented in Table~\ref{tab:bayesianLassoParams}. While performing the simulations it was observed that a simple random walk could not leave the neighbourhood of $\boldsymbol{0}_n$ even though most of the probability mass lays elsewhere, as can be investigated from the plot for the first temperature level ($L = 1$) in Figure~\ref{fig:bayesianBridgeRegression}. 

Results on the final number of reached temperatures are gathered in Figure~\ref{fig:ultimateNumberOfTemperatureLevels} and demonstrate that its inclusion in the algorithm gives rather conservative results in terms of temperature levels reduction, which might be considered safe an option. 

\begin{figure}
\centering
\caption{Distribution of Ultimate Number of Temperature Levels in the \textit{Bayesian Bridge} Regression }\label{fig:ultimateNumberOfTemperatureLevels}
 \includegraphics[width=8cm]{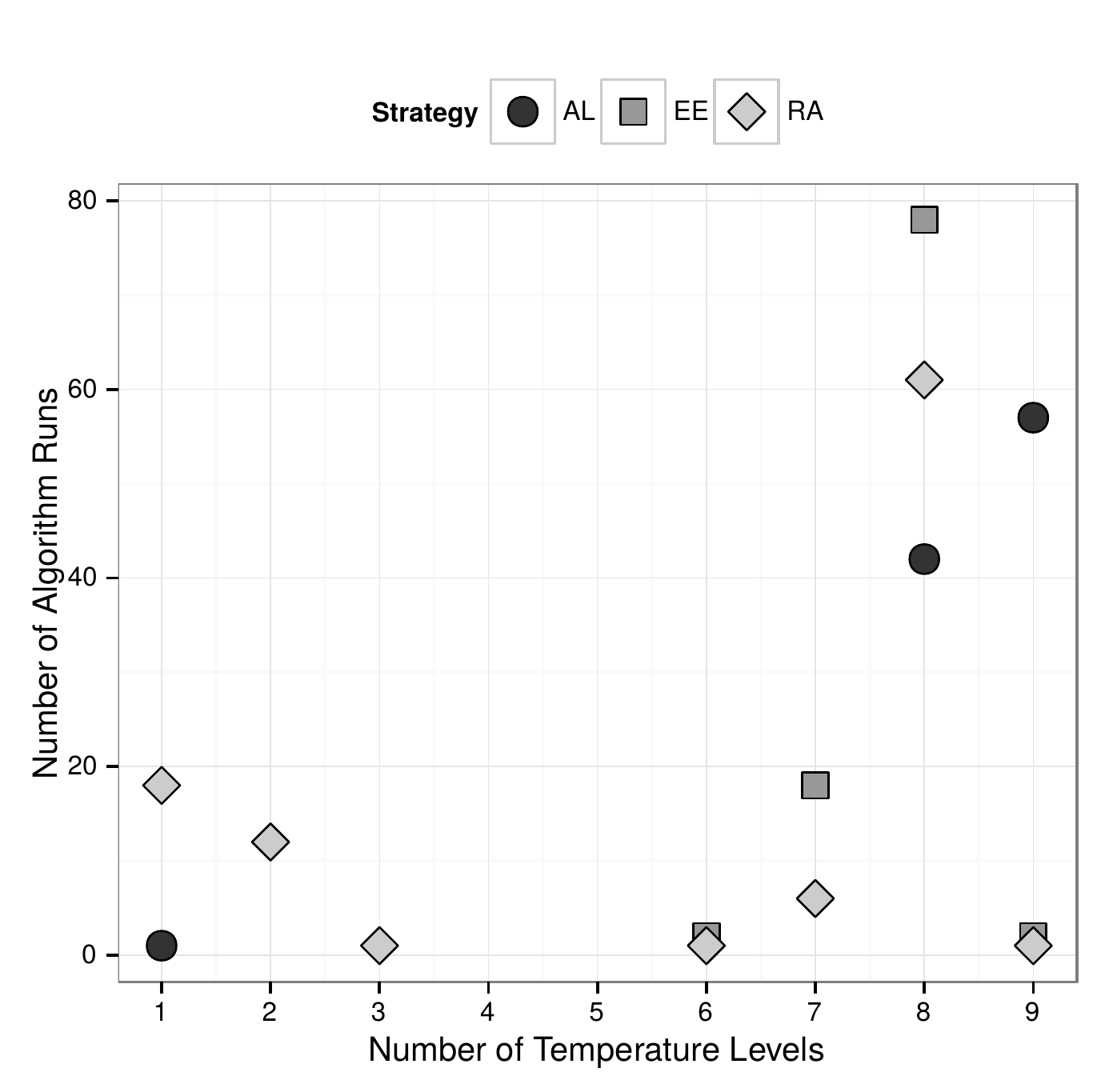}
 \par
\caption*{\small The above plot counts how many times different numbers of temperature were reached when using the temperature number adaptation and different swap strategies. The simulations lasted 100K iterations, preceded with 20K steps of burn-in. It is interesting to observe that the EE strategy never reached a number smaller than six and usually promoted a conservative choice of eight levels of temperature. Confront that with Figure~\ref{fig:lev8d}.} 
\end{figure}

\section{Conclusions}\label{conclusions}

In this article it was demonstrated that the \PT\, algorithm's efficiency is highly dependent upon the number of initial temperature levels; for underestimating that number leads to poor quality estimates resulting from neglecting some of the modes of the distribution of interest. On the other hand, overestimation of the number of temperatures does not significantly improve these estimates but highly increases the computational cost. Finally, if in that case one uses a naive state-independent strategy, then the overall efficiency of the algorithm drops due to longer waiting times for significant swaps, i.e. ones involving the base level temperature.

We have developed an algorithm with adaptable temperature scheme. Using it, one can overshot the number of temperature levels needed for performing good quality simulation and the procedure will automatically reduce some of the redundant levels. The results of our simulation support our claim of the method correctness. 

We have also proposed a novel for state dependent strategies that allow promotion of swaps based on various properties of the current state, e.g. equi-energy type moves, big-jumps promotion, and similar. We have shown that this framework is susceptible to analytical analysis based on already known results. Specifically, we have concentrated here on evaluating the equi-energy type strategy and thoroughly tested it. The results show that it is more efficient than the standard state-independent swap strategy, since the overall acceptance of a global moves is higher, the difference getting larger with the initial number of temperatures. Similar conclusions can be drawn on the behaviour of the \textit{Root Mean Square Error}.    

When it comes to future research, we plan to prepare an \textbf{R}-package containing all the above methods. 

All the scripts used in simulations are readily obtainable on request.


 \begin{acknowledgements}
We are grateful to Matti Vihola for useful discussions.
 \end{acknowledgements}

\bibliographystyle{spbasic}      
\bibliographystyle{spbasic}      


\begin{thebibliography}{25}
\providecommand{\natexlab}[1]{#1}
\providecommand{\url}[1]{{#1}}
\providecommand{\urlprefix}{URL }
\expandafter\ifx\csname urlstyle\endcsname\relax
  \providecommand{\doi}[1]{DOI~\discretionary{}{}{}#1}\else
  \providecommand{\doi}{DOI~\discretionary{}{}{}\begingroup
  \urlstyle{rm}\Url}\fi
\providecommand{\eprint}[2][]{\url{#2}}

\bibitem[{Andrieu and Thoms(2008)}]{andrieu-thoms}
Andrieu C, Thoms J (2008) A tutorial on adaptive {MCMC}. Statistical Computing
  18(4):343--373

\bibitem[{Atchad{\'e} and Rosenthal(2005)}]{atchade-rosenthal}
Atchad{\'e} YF, Rosenthal JS (2005) On adaptive {M}arkov chain {M}onte {C}arlo
  algorithms. Bernoulli 11(5):815--828

\bibitem[{Atchad{\'e} et~al(2011)Atchad{\'e}, Roberts, and
  Rosenthal}]{atchade-roberts-rosenthal}
Atchad{\'e} YF, Roberts GO, Rosenthal JS (2011) Towards optimal scaling of
  {M}etropolis-coupled {M}arkov chain {M}onte {C}arlo. Statistics and Computing
  21(4):555--568

\bibitem[{Baragatti et~al(2013)Baragatti, Grimaud, and
  Pommeret}]{BaragattiParallelTemperingWithEquiEnergyMoves}
Baragatti M, Grimaud A, Pommeret D (2013) Parallel tempering with equi-energy
  moves. Statistics and Computing 23(3):323--339

\bibitem[{Efron et~al(2004)Efron, Hastie, Johnstone, and
  Tibshirani}]{efron-hastie-tibshirani}
Efron B, Hastie T, Johnstone I, Tibshirani R (2004) Least angle regression. The
  Annals of Statistics (32):407--499

\bibitem[{Gilks et~al(1998)Gilks, Richardson, and Spiegelhalter}]{gilks-mcmc}
Gilks WR, Richardson S, Spiegelhalter DJ (1998) {M}arkov Chain {M}onte {C}arlo
  in Practice. Chapman \& Hall/CRC, Boca Raton, Florida

\bibitem[{Haario et~al(2001)Haario, Saksman, and
  Tamminen}]{haario-saksman-tamminen-am}
Haario H, Saksman E, Tamminen J (2001) An adaptive {M}etropolis algorithm.
  Bernoulli 7(2):223--242

\bibitem[{Hansmann(1997)}]{hansmann}
Hansmann UHE (1997) Parallel tempering algorithm for conformational studies of
  biological molecules. Chemical Physics Letters 281(1-3):140--150

\bibitem[{Kofke(2002)}]{kofke}
Kofke D (2002) On the acceptance probability of replica-exchange {M}onte
  {C}arlo trials. Journal of Chemical Physics 117(15):6911--1914

\bibitem[{Kone and Kofke(2005)}]{kone-kofke}
Kone A, Kofke D (2005) Selection of temperature intervals for
  parallel-tempering simulations. Journal of Chemical Physics 122

\bibitem[{Kou et~al(2006)Kou, Zhou, and Wong}]{kou2006}
Kou SC, Zhou Q, Wong WH (2006) Equi-energy sampler with applications in
  statistical inference and statistical mechanics. The Annals of Statistics
  34(4):1581--1619

\bibitem[{Liang and Wong(2001)}]{liang-wong}
Liang F, Wong WH (2001) Evolutionary {M}onte {C}arlo for protein folding
  simulations. Journal of Chemical Physics 115(7)

\bibitem[{Marinari and Parisi(1992)}]{marinari-parisi}
Marinari E, Parisi G (1992) Simulated tempering: a new {M}onte {C}arlo scheme.
  Europhys Lett 19(6):451--458

\bibitem[{Metropolis et~al(1953)Metropolis, Rosenbluth, Rosenbluth, Teller, and
  Teller}]{metropolis}
Metropolis N, Rosenbluth AW, Rosenbluth MN, Teller AH, Teller E (1953)
  Equations of state calculations by fast computing machines. Journal of
  Chemical Physics 21(6):1087--1092

\bibitem[{Miasojedow et~al(2013{\natexlab{a}})Miasojedow, Moulines, and
  Vihola}]{BM2}
Miasojedow B, Moulines E, Vihola M (2013{\natexlab{a}}) An adaptive parallel
  tempering algorithm. Journal of Computational and Graphical Statistics
  22(3):649--664

\bibitem[{Miasojedow et~al(2013{\natexlab{b}})Miasojedow, Moulines, and
  Vihola}]{supplement}
Miasojedow B, Moulines E, Vihola M (2013{\natexlab{b}}) Appendix to ``{A}n
  adaptive parallel tempering algorithm''. Journal of Computational and
  Graphical Statistics

\bibitem[{Park and Casella(2008)}]{park-casella}
Park T, Casella G (2008) Bayesian lasso. Journal of the American Statistical
  Association 103(482):681--686

\bibitem[{Robert and Casella(1999)}]{robert-casella}
Robert CP, Casella G (1999) {M}onte {C}arlo Statistical Methods.
  Springer-Verlag, New York

\bibitem[{Roberts and Rosenthal(2009)}]{roberts-rosenthal-examples}
Roberts GO, Rosenthal JS (2009) Examples of adaptive {MCMC}. Journal of
  Computational and Graphical Statistics 18(2):349--367

\bibitem[{Roberts and Rosenthal(2012)}]{roberts-rosenthal-simulated-tempering}
Roberts GO, Rosenthal JS (2012) Minimising {MCMC} variance via diffusion
  limits, with an application to simulated tempering. Tech. rep.,
  \urlprefix\url{http://probability.ca/jeff/research.html}

\bibitem[{Roberts et~al(1997)Roberts, Gelman, and
  Gilks}]{roberts-gelman-gilks-scaling}
Roberts GO, Gelman A, Gilks WR (1997) Weak convergence and optimal scaling of
  random walk {M}etropolis algorithms. Annals of Applied Probability
  7(1):110--120

\bibitem[{Schreck et~al(2013)Schreck, Fort, and Moulines}]{Schreck}
Schreck A, Fort G, Moulines E (2013) Adaptive equi-energy sampler: Convergence
  and illustration. ACM Trans Model Comput Simul 23(1):5:1--5:27

\bibitem[{Swendsen and Wang(1986)}]{swendsen-wang}
Swendsen RH, Wang JS (1986) Replica {M}onte {C}arlo simulation of spin-glasses.
  Physical Review Letters 57(21):2607--2609

\bibitem[{Tibshirani(1996)}]{tibshirani-LASSO-seminal}
Tibshirani R (1996) Regression shrinkage and selection via the lasso. Journal
  of the Royal Statistical Society Series B (58):267--288

\bibitem[{Woodard et~al(2009)Woodard, Schmidler, and
  Huber}]{woodard-schmidler-huber-rapid}
Woodard DB, Schmidler SC, Huber M (2009) Conditions for rapid mixing of
  parallel and simulated tempering on multimodal distributions. Annals of
  Applied Probability 19(2):617--640

\end{thebibliography}

\end{document}